%% file: selection-7-of-16.tex
\documentclass[11pt]{article}

\usepackage[utf8]{inputenc}
\usepackage[T1]{fontenc}
\usepackage{lmodern}
\usepackage[margin=1in]{geometry}
\usepackage{amsmath,amssymb,amsthm}
\usepackage{booktabs}
\usepackage{placeins}
\usepackage{enumitem}
\usepackage{microtype}
\usepackage{parskip}
\usepackage{xcolor}
\usepackage{tikz}
\usepackage{listings}
\usepackage[hidelinks]{hyperref}
\usepackage{orcidlink}
\newcommand{\doi}[1]{doi:\href{https://doi.org/#1}{#1}}

\newtheorem{theorem}{Theorem}
\newtheorem{lemma}[theorem]{Lemma}
\newtheorem{corollary}[theorem]{Corollary}
\newtheorem{proposition}[theorem]{Proposition}
\theoremstyle{remark}
\newtheorem{remark}[theorem]{Remark}

\newcommand{\Down}{\mathcal{D}}
\newcommand{\FG}{\mathrm{FG}}
\newcommand{\FGd}{\mathrm{FG}^{\delta}}
\newcommand{\KEY}{\mathrm{KEY}}

\title{Selecting the 7th Smallest of 16 Elements Takes Exactly 28 Comparisons}
\author{Jonathan Cremer\,\orcidlink{0009-0004-3018-1284}}
\date{}

\begin{document}
\maketitle

\begin{abstract}
\noindent
Let $V_i(n)$ denote the minimum worst-case number of comparisons needed to
find the $i$-th smallest of $n$ distinct elements. We construct algorithms
showing $V_7(16)\le28$ and $V_8(16)\le29$, improving the previously reported
upper bounds of 30 and 31. Together with known lower
bounds, these results establish $V_7(16)=28$ and leave
$V_8(16)\in\{28,29\}$. For 17 elements, we establish $V_5(17)=27$ and
improve the upper bounds for ranks 6 through 9 to 29, 30, 31, and 31,
respectively. Every new upper bound has an explicit decision-graph
certificate verified by a static checker developed independently of the
search. We also derive a lower bound for selection from a partially ordered
state by combining the Fussenegger--Gabow weight with its order-reversed
counterpart. All new upper bounds and this lower bound are formally verified
in Lean~4. For
$V_8(16)$, we show that a lower bound based solely on the total number of
decision-tree leaves cannot establish $V_8(16)\ge29$. The algorithms were
found using fixed-budget search, information-guided comparison ordering,
and proof-number search. Proof-number search was decisive in finding the
28-comparison algorithm; in our experiments, refuting a comparison budget
remained substantially more expensive than finding an algorithm within it.
\end{abstract}

\section{Introduction}

Selection asks for the element of a given rank using as few pairwise
comparisons as possible. Write $V_i(n)$ for the smallest worst-case number
of comparisons needed to find the $i$-th smallest of $n$ distinct elements.
Reversing the order shows $V_i(n)=V_{n+1-i}(n)$. Classical results include
$V_1(n)=n-1$ and $V_2(n)=n-2+\lceil\log_2 n\rceil$ (Kislitsyn;
see~\cite{knuth}), and the general upper bound of Hadian and
Sobel~\cite{hadian-sobel},
\[
  V_i(n)\le n-i+(i-1)\lceil\log_2(n+2-i)\rceil .
\]
Asymptotically, selection takes linear time~\cite{bfprt}, but even the
leading constant for the median is open~\cite{bent-john,dz-lower,dz-upper}.

Small exact values have been determined by computer search, starting with
Gasarch, Kelly and Pugh~\cite{gkp} and Oksanen~\cite{oksanen-web,oksanen}.
D\"orrer et al.~\cite{sea25} settled all entries for $n\le15$ and the first
six ranks for $n=16$. The two remaining 16-element entries had lower bound
28; with the upper bounds reported in~\cite[Table~3]{gkp}, the ranges were
$28\le V_7(16)\le30$ and $28\le V_8(16)\le31$.

\begin{theorem}\label{thm:main7}
$V_7(16)=V_{10}(16)=28$.
\end{theorem}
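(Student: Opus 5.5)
The equality splits into a lower bound and an upper bound, and by the order-reversal symmetry $V_i(n)=V_{n+1-i}(n)$ recalled in the introduction, it suffices to treat $i=7$. The value $V_{10}(16)$ then follows by reversing every comparison outcome in any algorithm or adversary for rank $7$.

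For the lower bound $V_7(16)\ge28$ I would simply cite the known result: as stated above, D\"orrer et al.~\cite{sea25} established the lower bound $28$ for the two remaining 16-element entries. No new argument is needed on this side.

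The upper bound $V_7(16)\le28$ is the substance, and I would prove it by an explicit, machine-checkable construction rather than a closed-form argument. A state is a partial order on the $16$ elements, namely the transitive closure of the comparisons made so far. The algorithm is a decision graph. Each internal node is a state together with a chosen pair $(a,b)$ of incomparable elements, and its two children are the states after $a<b$ and after $a>b$. Both children are identified up to isomorphism of the poset, and also with a rank-preserving reduction. Once an element is known to have at least $7$ elements below it or at least $10$ above it, it can no longer be the 7th smallest. Such elements can be discarded, with the target rank adjusted, so states collapse to smaller selection problems $(n',i')$ on a residual poset. A leaf is a state in which some element has exactly $6$ elements below it and $9$ above it. The certificate is then the statement that every root-to-leaf path has length at most $28$. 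To verify it, I would check each node: its comparison is legal, its children are correctly computed and canonicalized, and its remaining budget $d$ satisfies $d\ge1+\max(d_{<},d_{>})$.

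To find the graph, I would start from a standard opening, such as pairing the elements into $8$ comparisons and then building small sorted chains or merging pairs, as in Hadian--Sobel-style tournaments. After that, I would run a fixed-budget AND/OR search. At each OR node, candidate comparisons are ordered by how evenly they split the number of linear extensions consistent with the state, weighted by the remaining budget. A state with budget $d$ is immediately refuted if its linear-extension count, or a Fussenegger--Gabow-style leaf count, exceeds $2^d$. Results are cached in a transposition table keyed by canonical poset. The small residual selection problems with $n\le15$ can be solved by reusing known exact values, which serve as terminal budgets.

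I expect the main obstacle to be the size of the search. Near the root, the budget is tight and many states are neither quickly provable nor quickly refutable. My first attempt would be proof-number search over the AND/OR tree: it focuses effort on the most promising branch while still being complete. I would combine it with aggressive symmetry reduction. Once the search finds a witness, I would export the explicit decision graph and check it with an independent static checker, so that correctness does not depend on the search code.
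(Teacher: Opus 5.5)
Your overall route is the paper's: you quote the lower bound $28$ from the computer search of D\"orrer et al., you obtain $V_{10}(16)$ by order reversal, and you witness $V_7(16)\le28$ by an explicit decision graph. The paper's graph is a 329-function certificate, found by df-pn in 5\,s, accepted by a checker independent of the search, and also verified in Lean. The upper bound consists of nothing but such a certificate. Soundness is safe, because a certificate is checked independently. However, two ingredients of your search would, as written, keep you from ever producing one.

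First, the number of linear extensions is the information bound for \emph{sorting}, not for selection. A selection tree only has to separate the compatible solutions $(L,e,U)$, and many total orders end at the same leaf. For example, the minimum of $n$ elements takes $n-1$ comparisons, although $E_n$ has $n!$ linear extensions. Your rule ``refute when the linear-extension count exceeds $2^d$'' therefore discards solvable states, beginning with the root, since $16!>2^{44}>2^{28}$. The valid leaf bounds are $|C(P)|$ and the stronger weights $\FG$, $\FGd$ and $\Phi^*=\FG+\FGd-|C(P)|$ of Theorem~\ref{thm:phi}. Together with the single-path bound $\KEY(P)$ of Theorem~\ref{thm:key}, they give the pruning rule $\mathrm{LB}$ that the paper uses. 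Ordering candidates by how evenly they split linear extensions is only a heuristic and does no harm; the paper orders by the larger compatible-solution count of the two outcomes.

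Second, the exact values known for $n\le15$ are the values $V_{i'}(n')=V_{i'}(E_{n'})$ of the antichain. Your residual states are posets with many relations already known, and their cost can be far smaller. In the paper's algorithm, the branch $r_B<b_C$ leaves the 6th smallest of 15 elements with only 12 comparisons remaining. By contrast, $V_6(15)\ge14$, because the comparisons must connect all 15 elements. Used as the cost of such a state, these values would wrongly refute it. Used correctly, only as upper bounds $V_{i'}(P')\le V_{i'}(n')$, they close essentially no state at the budgets that occur. The residual subproblems therefore have to be solved by the search itself, as they are in the paper's certificate.
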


\begin{theorem}\label{thm:main8}
$V_8(16)=V_9(16)\le29$. Hence $V_8(16)\in\{28,29\}$.
\end{theorem}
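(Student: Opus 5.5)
The symmetry $V_8(16)=V_9(16)$ is the order-reversal identity $V_i(n)=V_{n+1-i}(n)$ with $n=16$, $i=8$. The lower bound $V_8(16)\ge 28$ is already known from D\"orrer et al.~\cite{sea25}. Once the upper bound $V_8(16)\le 29$ is established, the conclusion $V_8(16)\in\{28,29\}$ follows at once. The entire content is therefore an explicit algorithm that finds the 8th smallest of 16 elements using at most 29 comparisons in the worst case, together with a machine-checkable certificate that it works.

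I would build the algorithm as a decision graph over states, where each state is a partial order on the 16 elements. At each internal node one comparison is chosen, and the two outcomes lead to successor states. Isomorphic states are merged, and states are also merged under the order-reversal symmetry, which swaps rank $i$ with rank $n+1-i$. A state is terminal when the partial order already certifies one element as the 8th smallest: that element has exactly 7 known predecessors and 8 known successors, or more generally the poset implies it.

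The search runs with a fixed budget. A state with $k$ comparisons remaining is pruned immediately if a lower bound exceeds $k$. Usable bounds include the information-theoretic bound on the number of leaves and the Fussenegger--Gabow weight combined with its reversed counterpart. The search begins with standard opening moves: pairwise comparisons forming 8 pairs, then merging into small chains or trees, in the style of Hadian--Sobel or tournament openings. Comparisons are ordered by how evenly they split the information, for example the ratio of linear extensions on each side. Proof-number search then completes an AND--OR proof that every branch finishes within the remaining budget.

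The certificate is the resulting DAG. A static checker confirms three things: each comparison is between incomparable elements, both outcome edges lead to the correct transitive closure (up to the recorded isomorphism), and every leaf certifies rank 8 at depth at most 29. The same certificate can be replayed in Lean~4.

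The main obstacle is the search itself. The state space after the opening moves is enormous, and a budget of 29 leaves little slack compared with the 28 lower bound, so naive depth-first search will not terminate. I would first fix a good opening, such as building sorted 4-chains or 2-chains and then merging them, and run proof-number search only from the resulting midgame positions. Aggressive caching of canonical forms of posets is needed there. Elements that can provably no longer be the 8th smallest are discarded, which reduces each state to a smaller selection problem with an adjusted rank.
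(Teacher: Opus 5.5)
Your framing matches the paper's route. You get $V_8(16)=V_9(16)$ by order reversal, and you take the lower bound $28$ from D\"orrer et al.; the paper derives it as $V_8(16)\ge V_8(15)+1=28$ via the transfer lemma. Your upper bound is to be witnessed by a decision-graph certificate, found by fixed-budget proof-number search with discarding and lower-bound pruning, and then checked statically and in Lean.

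\textbf{The gap: there is no witness.} What you describe is a plan for producing a proof, not a proof. The statement $V_8(16)\le 29$ is an existence claim, and its entire content is an explicit algorithm of depth at most $29$. Your proposal never supplies one. The sentence ``proof-number search then completes an AND--OR proof'' is exactly what has to be established, and nothing you write shows that the search ends in success rather than timing out or refuting. This is not a formality: the paper's own df-pn runs on the same instance at budget $28$ ended undecided. What the paper provides is the object itself: a 2937-function certificate, plus an archived 252-function one. It is accepted by an independently written checker and, after translation, by the Lean kernel as $\exists Q\ \mathrm{SelectsWithin}(Q,16,7,29)$.

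\textbf{Your suggested opening cannot reach 29.} The one concrete choice you propose, fixing an opening that builds sorted $4$-chains, provably fails at this budget. Whether you sort groups of four or merge the $8$ pairs, an adversary can force $20$ comparisons and reach four disjoint sorted $4$-chains. In that state $|\Down_7|=80$ and $|\Down_8|=85$. This gives $\FG=480$, $\FGd=452$ and $|C|=280$, hence $\Phi^*=652>2^9$. By Theorem~\ref{thm:phi} at least $10$ more comparisons are needed, so every algorithm with that opening has worst case at least $30$. Sorting spends comparisons that selection does not need. The paper's 252-function certificate instead passes through the $15$-comparison runner-up order $R_{16}$ of Section~\ref{sec:structure}, which leaves most relations unresolved.

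\textbf{Checker omission.} If your states are the reduced problems after discarding, your checker must also verify each discard and the rank adjustment (the paper's rules 3 and 4). Otherwise ``every leaf certifies rank 8'' refers to a smaller problem and no longer implies correctness.
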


\begin{theorem}\label{thm:main17}
$V_5(17)=V_{13}(17)=27$. Moreover $V_6(17)\le29$, $V_7(17)\le30$,
$V_8(17)\le31$, and $V_9(17)\le31$.
\end{theorem}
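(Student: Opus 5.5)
The statement has two parts: a matching lower and upper bound for rank~5, and four new upper bounds for ranks 6 through 9. I would treat them separately, and only the upper bounds need new work. By order reversal, $V_i(n)=V_{n+1-i}(n)$, so $V_{13}(17)=V_5(17)$. It therefore suffices to show $V_5(17)\ge 27$, $V_5(17)\le 27$, and the four upper bounds.

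For the lower bound $V_5(17)\ge27$, I would first look for a known result. One candidate is the tabulated lower bounds of D\"orrer et al.~\cite{sea25} or Gasarch--Kelly--Pugh~\cite{gkp}, if they already give 27 for $(n,i)=(17,5)$. Another is monotonicity: any algorithm for $V_5(17)$ yields one for $V_5(16)$ with one fewer comparison, by adding a $-\infty$ element and using an adversary that makes it lose its first comparison for free. Equivalently, a standard padding argument gives $V_i(n)\ge V_{i}(n-1)+1$. I expect the value $V_5(16)$ settled in~\cite{sea25} to be $26$, which would give $V_5(17)\ge 27$ immediately. If it is not, the fallback is the Fussenegger--Gabow type bound, $V_i(n)\ge n-i+\lceil\log_2\binom{n}{i-1}\rceil$. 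For $n=17$ and $i=5$ this gives $12+\lceil\log_2 2380\rceil=12+12=24$, which is too weak. In that case the paper's combined FG/reversed-FG lower bound on partially ordered states, with a short adversary case analysis over the first few comparisons, would be needed. My plan is therefore to cite the padding argument with the known $V_5(16)$, and to use the new lower bound only if that fails.

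For the upper bounds ($27,29,30,31,31$), I would construct explicit adversary-complete algorithms as decision graphs. The states are partial orders on 17 elements, taken up to isomorphism. Each internal node specifies a comparison, and both outcomes lead to nodes whose remaining budget is one less. A terminal node is one whose partial order certifies the target rank: there is an element with exactly $i-1$ elements known to be below it and $n-i$ known to be above it. To find these graphs I would use the search machinery described in the abstract. Budget-bounded AND/OR search runs over canonicalized posets. It is pruned by the information bound, where a state fails if its number of compatible linear extensions weighted toward selection exceeds $2^{\text{budget}}$, and by the FG-style lower bound from states. Proof-number search allocates effort to the most promising branches. The opening phase would use standard moves, such as building a binomial-tree or pairing structure among small components, to shrink the search.

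Correctness is then a finite check, done by a static checker and replicated in Lean. For every node, the two children must be the canonical forms of the poset plus $a<b$ and plus $a>b$. Depth must be at most the budget, and every leaf must certify the rank.

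I expect the main obstacle to be the tight cases, $V_5(17)\le27$ and especially $V_9(17)\le31$. There the budget equals or nearly equals the lower bound, the state space under canonical isomorphism is enormous, and a single bad opening can force a refutation search that is far more expensive than finding a solution. My first strategy would be to fix a strong opening, namely pairwise comparisons followed by merging into small trees, and to run proof-number search from the resulting states with transposition tables. If that stalls, I would widen the opening choices iteratively.
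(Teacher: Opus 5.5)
Your proposal follows the paper's route: $V_{13}(17)=V_5(17)$ by order reversal, $V_5(17)\ge V_5(16)+1=27$ from the published value $V_5(16)=26$ of~\cite{sea25} and the transfer lemma, and the upper bounds from explicit decision-graph certificates found by search and accepted by an independent checker. One fix is needed in the padding step: to keep the rank at 5, the artificial element must be \emph{larger} than all 16 inputs, because a $-\infty$ element gives only $V_5(17)\ge V_4(16)+1=25$ (unless you apply it to the dual rank 13), and the saved comparison comes from the fact that every path must compare that element at least once, since it must end up known to lie above the answer.
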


Our contribution to all three theorems is the upper bounds. The lower bound
of Theorem~\ref{thm:main7} is the entry 28--33 of~\cite[Table~3]{sea25},
obtained there by computer search; we use it without rederiving it. The
lower bounds $V_8(16)\ge28$ and $V_5(17)\ge27$ follow from the published
values $V_8(15)=27$ and $V_5(16)=26$~\cite{sea25} and the transfer lemma of
Section~\ref{sec:prelim}. Tables~\ref{tab:row16} and~\ref{tab:row17} show
the previous and new ranges. The table in~\cite{sea25} lists the
Hadian--Sobel values 33 and 36 for the two open 16-element entries; the
upper bounds 30 and 31 of~\cite{gkp} are smaller and are the baseline we
improve.

\begin{table}[htbp]
\centering
\small
\begin{tabular}{lcccccccc}
\toprule
$i$ & 1 & 2 & 3 & 4 & 5 & 6 & 7 & 8 \\
\midrule
before~\cite{sea25,gkp} & 15 & 18 & 21 & 24 & 26 & 27 & 28--30 & 28--31 \\
this paper & 15 & 18 & 21 & 24 & 26 & 27 & \textbf{28} & \textbf{28--29} \\
\bottomrule
\end{tabular}
\caption{The row $n=16$ of the selection table.}
\label{tab:row16}
\end{table}

\begin{table}[htbp]
\centering
\small
\begin{tabular}{lccccccccc}
\toprule
$i$ & 1 & 2 & 3 & 4 & 5 & 6 & 7 & 8 & 9 \\
\midrule
before & 16 & 20 & 22 & 25 & 27--28 & 28--31 & 29--33 & 29--34 & 29--35 \\
this paper & 16 & 20 & 22 & 25 & \textbf{27} & 28--\textbf{29} & 29--\textbf{30} & 29--\textbf{31} & 29--\textbf{31} \\
\bottomrule
\end{tabular}
\caption{The row $n=17$. Lower bounds: classical formulas for $i\le2$,
otherwise the row $n=16$ of~\cite{sea25} with the transfer lemma. Previous
upper bounds: Hadian--Sobel for $i\le4$ and~\cite[Table~3]{gkp} for
$i\ge5$.}
\label{tab:row17}
\end{table}

\paragraph{Contributions and evidence.}
\begin{enumerate}[leftmargin=*,itemsep=2pt]
  \item Seven new upper bounds, each witnessed by an explicit certificate
  (Section~\ref{sec:cert}). A static checker that shares no code with the
  search accepts every certificate, and the Lean~4 kernel checks a formal
  proof of every bound.
  \item A description of the 15-comparison opening shared by several of
  the new algorithms (Section~\ref{sec:structure}).
  \item A lower bound for selection from a partial order, combining the
  weight of Fussenegger and Gabow~\cite{fg} with its order-reversed
  counterpart (Section~\ref{sec:lb}), with a formal proof in Lean~4 for all
  finite partial orders.
  \item A search method that made the constructions practical
  (Section~\ref{sec:search}).
\end{enumerate}
Statements that rest only on search runs, such as refutations under a fixed
opening, are labeled as such and are not used in any theorem.

\section{Preliminaries}\label{sec:prelim}

A selection problem is a strict partial order $P$ on a set $\Omega$ of $n$
elements, recording the comparisons known so far and their transitive
consequences, together with a rank $i$. Write $V_i(P)$ for the minimum
worst-case number of further comparisons. We write $E_n$ for the empty
order on $n$ elements (the antichain), in which nothing is known. Since an
algorithm starts in this state, $V_i(n)=V_i(E_n)$ by definition.
The inputs are totally ordered, but the algorithm knows only $P$. Two
elements whose order is not yet known, that is, not implied by $P$, are
called incomparable, written $a\parallel b$; only such pairs are worth
comparing. $P+ab$ denotes the transitive closure of $P$ with $a<b$ added.
For an unsolved problem
\[
  V_i(P)=1+\min_{a\parallel b}\max\{V_i(P+ab),V_i(P+ba)\},
\]
and $V_i(P)=0$ once some element is known to have exactly $i-1$ elements
below and $n-i$ above it.

An element with at least $i$ known predecessors is too large to be the
answer, and one with at least $n-i+1$ known successors is too small.
Deleting such an element does not change the optimal cost~\cite{sea25};
deleting a too-small element lowers the rank by one. Problems that are
isomorphic, or dual under order reversal with rank $n+1-i$, have the same
cost.

\paragraph{Transfer lemma~\cite[Lemma~1]{sea25}.}
For $1\le i\le n$,
\[
 V_i(n+1)\ge V_i(n)+1,
 \qquad
 V_{i+1}(n+1)\ge V_i(n)+1.
\]
For the first inequality, run an algorithm for rank $i$ of $n+1$ elements
on the $n$ given elements and an artificial element larger than all of them,
answering comparisons with that element without cost. It returns the
$i$-th smallest given element. On every path the artificial element must be
compared at least once, since it must end up known to lie above the answer,
so every path saves at least one comparison. An artificial smallest element
gives the second inequality.

\section{Certificates and their verification}\label{sec:cert}

\subsection{Certificates}

We use the standard decision-tree model~\cite[Sec.~5.3]{knuth}. For fixed
$n$, a deterministic algorithm that accesses the elements only through
comparisons corresponds to a binary tree: each internal node compares two
elements and branches on the outcome, and each leaf returns an element. An
input follows a single root-to-leaf path, so the worst-case number of
comparisons is the depth of the tree, and $V_i(n)$ is the minimum depth of
a tree that always returns the $i$-th smallest element. An upper bound
$V_i(n)\le k$ is therefore proved by exhibiting one such tree of depth at
most $k$; a lower bound must hold for every such tree. Merging identical
subtrees turns the tree into a directed acyclic graph. Our certificates are such graphs, written
as Rust programs in the format of the search code of~\cite{sea25}. Each
function solves a remaining subproblem: it may compare two arguments,
discard arguments that cannot be the answer, and call another function on
the remaining ones. A doc comment declares the size, the 0-based rank, and
known relations. For example, this function from the 28-comparison
certificate finds the median of three elements when $b<c$ is known:
\begin{lstlisting}
/// n = 3, i = 1, b < c
fn select_12([a, b, c]: [usize; 3]) -> usize {
    if a < b { select_1([b]) } else { select_2([a, c]) }
}
\end{lstlisting}
If $a<b$, then $a<b<c$ and $b$ is returned. Otherwise $b$ is below both
$a$ and $c$, so it is discarded and the minimum of $a$ and $c$ is found with
one more comparison. The root function \texttt{select\_0} receives all
inputs with no known relations.

\subsection{The static checker and why it is sound}\label{sec:static}

The checker was written from the mathematical definitions without access to
the search code, and accepts only a narrow grammar. It never uses the
relations stated in comments. Instead it propagates knowledge from the root:
each comparison outcome is added and closed transitively, and arguments are
mapped to the callee's names. A function called from several places gets as
its precondition $P_f$ the relations that hold at \emph{every} call site;
this can only make acceptance harder. Under $P_f$ and each branch outcome
it requires:
\begin{enumerate}[leftmargin=*,itemsep=2pt]
  \item compared elements are incomparable before the comparison (this
  rejects wasted comparisons; soundness does not need it);
  \item each call passes the declared number of pairwise distinct arguments;
  \item each discarded element is certainly below the requested 0-based rank
  $r$ among $m$ elements (at least $m-r$ known successors) or certainly above
  it (at least $r+1$ known predecessors);
  \item the callee's declared rank equals $r$ minus the number of discarded
  elements below;
  \item a directly returned element has exactly $r$ known predecessors and
  $m-1-r$ known successors.
\end{enumerate}
It also checks that every function is reachable, the root is never called,
and the call graph is acyclic.

\paragraph{Soundness.}
Call $f$ correct if for all distinct inputs whose order
extends $P_f$ it terminates and returns the element of its declared rank.
By induction in reverse topological order, which exists because the call
graph is acyclic: an input of $f$ follows one checked branch; a returned
element is correct by rule~5; for a call, the arguments are distinct inputs
(rule~2) that satisfy the callee's precondition, since that precondition
was computed from what every caller knows; and rules~3 and~4 show that the
callee's answer is the caller's answer. Termination follows from
acyclicity. The root has the empty precondition, so the program is correct
on all inputs.

\paragraph{Comparison bound.}
Each input follows a single path from the root to a return and makes only
the comparisons on that path. The worst case is therefore the largest
number of comparisons on any path, not the size of the graph. The checker
computes it backward: a return costs $0$, and a comparison costs one more
than the larger cost of its two branches. Paths that no input can follow
only make this bound larger, so it remains a valid upper bound.

\paragraph{Tests.}
The checker accepts all 70 optimal algorithms published
with~\cite{sea25} with exactly their known costs, and rejected every wrong
or ill-formed program among 12,046 generated mutants. An independently
written exact oracle, following every path without intersecting
preconditions, confirms each of our certificates and finds that its exact
worst case equals the checker's bound. Each certificate also passes $2\cdot10^7$
compiled executions on pseudorandom permutations.

\subsection{Formal proofs in Lean}

A Lean~4 development formalizes the same argument. It uses only Lean's
core library, so every definition the theorems depend on is contained in
our own short specification. It defines programs of this kind, their
execution with a comparison counter, and the predicate $\mathrm{SelectsWithin}(Q,n,r,w)$:
on every list of $n$ distinct natural numbers, program $Q$ returns an input
with exactly $r$ smaller inputs, using at most $w$ comparisons. Programs
access inputs only by comparing them and passing them on, so they are
comparison algorithms, and replacing $n$ distinct elements of any totally
ordered set by their ranks $0,\dots,n-1$ leaves every execution unchanged.
The development proves that acceptance by a checker implies the predicate,
and the kernel evaluates that checker on the certificate data, giving
\[
 \exists Q\ \mathrm{SelectsWithin}(Q,n,r,w)\quad\text{for}\quad
 (n,r,w)\in\left\{\begin{array}{@{}l@{}}
   (16,6,28),\ (16,7,29),\ (17,4,27),\ (17,5,29),\\
   (17,6,30),\ (17,7,31),\ (17,8,31)
 \end{array}\right\}.
\]
The proofs use only the standard axioms \texttt{propext},
\texttt{Classical.choice} and \texttt{Quot.sound}, not
\texttt{native\_decide}, and they replay in the independent kernel checker
\texttt{leanchecker}. The bounds do not depend on the tool that translates
the Rust certificates into Lean: the theorems concern the translated
programs themselves, and each is a valid witness. Only the further claim
that these programs are the archived Rust files depends on the
translation. We cross-checked it against an independent parser and
interpreter. The lower-bound theory of Section~\ref{sec:lb} is formalized in
a separate development described there.

\begin{table}[htbp]
\centering
\small
\begin{tabular}{lrrl}
\toprule
Bound & Functions & Search time & SHA-256 prefix \\
\midrule
$V_7(16)\le28$ & 329 & 5\,s & \texttt{eb8f064c12bb} \\
$V_8(16)\le29$ & 2937 & 5\,min & \texttt{2e9615ae1d9b} \\
$V_5(17)\le27$ & 146 & 103\,s & \texttt{34e26f201b68} \\
$V_6(17)\le29$ & 266 & 89\,s & \texttt{ae6d6b42e1c4} \\
$V_7(17)\le30$ & 977 & 8\,min & \texttt{33718071f64a} \\
$V_8(17)\le31$ & 1077 & 19\,min & \texttt{5925b96967a8} \\
$V_9(17)\le31$ & 1493 & 63\,min & \texttt{9ce649f22b2d} \\
\bottomrule
\end{tabular}
\caption{The certificates used in the Lean proofs, found by proof-number
search (single thread, AMD Ryzen~7 5800X). Further certificates for the
16-element bounds, with 365 and 252 functions, are archived.}
\label{tab:certs}
\end{table}

\section{The 28-comparison algorithm}\label{sec:structure}

The 329-function certificate opens with 15 comparisons that reach the same
partial order on every input, followed by an irregular endgame of at most 13
comparisons. Calling the smaller element of a comparison its winner: compare
eight disjoint pairs; compare the winners of pairs to form four
four-element tournaments; then run a tournament among the four losers of
these finals, the \emph{runner-ups}. Each four-element group then has
winner $w$, its first opponent $b$, runner-up $r$, and the runner-up's first
opponent $d$, with $w<b$, $w<r$, $r<d$. Comparing the runner-ups of two
such groups gives $R_8$, and comparing the smaller runner-ups of two copies
of $R_8$ gives $R_{16}$ (Figure~\ref{fig:v7structure}). Swapping an outcome
only renames components, so every input reaches $R_{16}$ after
$8+4+3=15$ comparisons.

\input{fig_v7_structure}

For example, the 16th comparison is $r_B$ against $b_C$. If $r_B<b_C$, then
$w_A$ lies below the ten elements $r_A,d_A,b_A,r_B,d_B,b_C,r_C,d_C,r_D,d_D$,
so it cannot be the 7th smallest; it is discarded, leaving the 6th smallest
of 15 with 12 comparisons. We found no short set of rules describing the
endgame. The second 28-comparison certificate and the 252-function
29-comparison certificate for the 8th smallest also pass through $R_{16}$.
The opening is not optimal for every rank: the bound of
Section~\ref{sec:lb} excludes ranks 1--3, and for the 8th smallest our
search finds no 28-comparison completion (a search-level result), so
$R_{16}$ is optimal there only if $V_8(16)=29$.

\FloatBarrier
\section{Lower bounds from a partially ordered state}\label{sec:lb}

This section bounds from below the number of comparisons $V_i(P)$ that are
still needed once some comparisons are known. Such bounds let the search
discard states whose remaining budget is provably too small. We combine two
classical weights into a new bound on the number of leaves of every correct
decision tree (Theorem~\ref{thm:phi}), recall a bound that follows a single
input (Theorem~\ref{thm:key}), show that the combined weight is the best in
a natural family (Proposition~\ref{prop:phiopt}), and explain why no leaf
count alone can settle $V_8(16)$ (Remark~\ref{rem:leafcount}).

A binary tree of depth $d$ has at most $2^d$ leaves, so a lower
bound $N$ on the number of leaves of every correct tree gives
$V_i(P)\ge\lceil\log_2 N\rceil$.

A \emph{compatible solution} is a triple $(L,e,U)$ with $|L|=i-1$ such that
some total order extending $P$ lists $L$, then $e$, then $U$; let $C(P)$ be
their set. At a correct leaf the known relations determine the sets below
and above the returned element $e$: otherwise $e$ could be placed directly
after its known predecessors in one consistent total order and directly
before its known successors in another, giving different ranks. Hence
distinct solutions need distinct leaves and
$V_i(P)\ge\lceil\log_2|C(P)|\rceil$~\cite[Thm.~6]{sea25}. This is weak:
finding the minimum of four elements has four solutions but needs three
comparisons, because each non-minimum must lose once.

\paragraph{Weights.}
A \emph{down-set} contains all known predecessors of its members; let
$\Down_j(P)$ be the down-sets of size $j$. For $L\in\Down_{i-1}(P)$, the
possible answers are the minimal elements of $\Omega\setminus L$; let $b(L)$
be their number minus one. For $A\in\Down_i(P)$, the possible answers are
the maximal elements of $A$; let $a(A)$ be their number minus one. A
solution corresponds to $L$ and $A=L\cup\{e\}$, so
$|C(P)|=\sum_L(b(L)+1)=\sum_A(a(A)+1)$. Define
\[
  \FG(P)=\sum_{L\in\Down_{i-1}(P)}2^{b(L)},\qquad
  \FGd(P)=\sum_{A\in\Down_i(P)}2^{a(A)},\qquad
  \Phi^*(P)=\FG(P)+\FGd(P)-|C(P)|.
\]
Fussenegger and Gabow~\cite[Thm.~2(b)]{fg} give a weight of this kind for
arbitrary partial orders (stated for the largest elements); after reversing
the order, their weight is at most $\FG(P)$, with equality when no comparisons are known.
The code of~\cite{sea25} computes $\max(\FG,\FGd)$. The new element is the
combination $\Phi^*$. For the minimum of four elements, $\FG=8$, $\FGd=4$,
$|C|=4$, and $\Phi^*=8$ gives the correct bound 3.

The key step is to follow how a single comparison changes the weight of
one fixed lower set $L$. After the known order has grown from $P$ to $Q$,
the set $L$ contributes
\[
  \psi_L(Q)=\begin{cases}
    2^{b_Q(L)} & \text{if $L$ is still a down-set of $Q$,}\\
    0 & \text{otherwise (then $L$ can no longer be the set below the answer),}
  \end{cases}
\]
where $b_Q(L)$ is $b(L)$ computed in $Q$. The lemma says that the weight of
$L$ never shrinks in total over the two outcomes.

\begin{lemma}\label{lem:world}
Let $x$ and $y$ be incomparable in $P$, let $P_1=P+xy$ and $P_2=P+yx$ be
the two possible outcomes of comparing them, and fix
$L\in\Down_{i-1}(P)$.
\begin{enumerate}[leftmargin=*,itemsep=2pt]
  \item If both $x$ and $y$ lie in $L$, the weight is unchanged in both
  outcomes: $\psi_L(P_1)=\psi_L(P_2)=\psi_L(P)$.
  \item If exactly one of them lies in $L$, one outcome keeps the weight
  $\psi_L(P)$ and the other sets it to $0$.
  \item If neither lies in $L$, the two outcomes together keep at least the
  weight: $\psi_L(P_1)+\psi_L(P_2)\ge\psi_L(P)$.
\end{enumerate}
The same holds for an initial set $A\in\Down_i(P)$ with weight $2^{a(A)}$,
with inside and outside exchanged: the weight is unchanged if $x,y\notin A$,
one outcome keeps it if exactly one of them lies in $A$, and the two
outcomes together keep at least the weight if $x,y\in A$.
\end{lemma}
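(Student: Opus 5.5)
The plan is a direct case analysis on how $x$ and $y$ sit relative to $L$. Two facts drive it. First, $L$ stays a down-set of $P+uv$ if and only if we are not in the case $v\in L$, $u\notin L$: the new relations of $P+uv$ are exactly $p<q$ with $p\le_P u$ and $v\le_P q$; such a pair leaves $L$ closed unless $q\in L$ and $p\notin L$, and since $L$ is a down-set of $P$, $q\in L$ forces $v\in L$ while $p\notin L$ forces $u\notin L$. Second, when $L$ survives, only the minimal elements of the complement $M=\Omega\setminus L$ matter, and these depend only on the restriction of the order to $M$. That restriction changes only when $x,y\in M$; otherwise every new relation $p<q$ with $p,q\in M$ would require $u\le_P p$ or $v\le_P q$ with $u$ or $v$ in $L$, which is impossible because $L$ is a down-set and $p,q\notin L$.

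With these facts the three cases go as follows.

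\begin{enumerate}[leftmargin=*,itemsep=2pt]
  \item If $x,y\in L$, the comparison never has the form $v\in L$, $u\notin L$, so $L$ survives in both outcomes. The order on $M$ is unchanged, hence $b(L)$ is unchanged, giving $\psi_L(P_1)=\psi_L(P_2)=\psi_L(P)$.
  \item Say $x\in L$, $y\notin L$. Then $P+yx$ kills $L$, so $\psi_L=0$ there. In $P+xy$, $L$ survives and the order on $M$ is unchanged, so the weight is kept.
  \item If $x,y\notin L$, $L$ survives both outcomes. Let $S$ be the set of minimal elements of $M$ in $P$, so $|S|=b+1$ with $b=b(L)$. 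Adding $x<y$ can only remove minimal elements, and an element $q$ stops being minimal only if $q\ge_P y$ and some $p\le_P x$ lies in $M$. Since minimal elements have no $P$-predecessor in $M$, the only minimal element that can be lost is $y$ itself, and only when $y\in S$. Symmetrically, $P+yx$ can lose only $x$. Thus if both $x,y\in S$, each outcome has $b-1$ and the sum is $2^{b-1}+2^{b-1}=2^b$. Otherwise at least one outcome keeps $b$, and the sum is at least $2^b$.
\end{enumerate}

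The main care point is the claim in case 3 that a new minimal element never appears and at most one existing one disappears per outcome. This rests on the description of the new relations of $P+uv$ as pairs $p<q$ with $p\le_P u$ and $v\le_P q$, and I would state that description explicitly first.

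For the dual statement about $A\in\Down_i(P)$ with weight $2^{a(A)}$, I would not repeat the argument. Instead I would apply the first part to the reversed order. Under reversal, $\Omega\setminus A$ becomes a down-set, the maximal elements of $A$ become the minimal elements of the complement, and the comparison $xy$ becomes $yx$. The roles of inside and outside therefore swap exactly as stated, with the two outcomes interchanged, which does not affect the conclusions.
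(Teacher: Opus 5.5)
Your proposal is correct and follows essentially the paper's own argument: the new relations of $P+uv$ are the pairs $p<q$ with $p\le_P u$ and $v\le_P q$, a comparison whose smaller element lies in $L$ creates no relation inside $M$ and none from $M$ into $L$, in case~3 only $y$ (resp.\ $x$) can stop being minimal in $M$, and the statement for $A$ follows by order reversal. One slip to fix is in your justification of the second fact: the inequality should read $p\le_P u$, not $u\le_P p$, and the ``$v\in L$'' alternative gives no contradiction by itself (the restriction to $M$ can change in the outcome that kills $L$), whereas the correct reason is that a surviving outcome with $x,y$ not both in $M$ has its smaller element $u$ in $L$, so $p\le_P u$ forces $p\in L$.
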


\begin{proof}
Adding $x<y$ creates only relations $u<v$ with $u\le x$ and $y\le v$. Let
$M=\Omega\setminus L$. If $x\in L$, every $u\le x$ lies in $L$, so no
relation inside $M$ and none from $M$ into $L$ is created; this gives the
first case and the surviving outcome of the second. The outcome placing an
element of $M$ below one of $L$ destroys $L$ as a down-set. If $x,y\in M$,
then $L$ stays a down-set in both outcomes, and a minimal $v$ of $M$ can
gain a predecessor in $M$ only through $y\le v$, which forces $v=y$. So
$b(L)$ drops by at most one in each child, and
$\psi_L(P_1)+\psi_L(P_2)\ge2\cdot2^{b(L)-1}$. Order reversal gives the
statement for $A$.
\end{proof}

\begin{theorem}\label{thm:phi}
Every correct decision tree for $(P,i)$ has at least $\Phi^*(P)$ leaves.
Hence $V_i(P)\ge\lceil\log_2\Phi^*(P)\rceil$.
\end{theorem}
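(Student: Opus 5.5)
We argue by induction on the depth of a correct decision tree $T$ for $(P,i)$, showing that the number of leaves of $T$ is at least $\Phi^*(P)$. It suffices to prove two facts.

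(a) At a solved state (a leaf), $\Phi^*(P)\le 1$.

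(b) For incomparable $x,y$ with outcomes $P_1=P+xy$ and $P_2=P+yx$,
\[
  \Phi^*(P_1)+\Phi^*(P_2)\ge\Phi^*(P).
\]
Then the leaf count $\ell(T)=\ell(T_1)+\ell(T_2)\ge\Phi^*(P_1)+\Phi^*(P_2)\ge\Phi^*(P)$ by induction, and the bound $V_i(P)\ge\lceil\log_2\Phi^*(P)\rceil$ follows since a tree of depth $d$ has at most $2^d$ leaves.

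For (a): at a correct leaf the returned $e$ has its lower set $L$ and upper set determined, as argued before the definition of weights. So $C(P)$ is a single triple, and $L$ is the unique down-set of size $i-1$ that can be extended. I need to be careful here: $\Down_{i-1}(P)$ may contain other down-sets. However, every down-set of size $i-1$ is the lower part of some linear extension, hence yields a compatible solution. So $\Down_{i-1}=\{L\}$ with $b(L)=0$, giving $\FG=1$. Symmetrically $\FGd=1$, and $|C|=1$, so $\Phi^*=1$.

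For (b), the key is a decomposition of $\Phi^*$ as a sum over solutions. Rewrite
\[
  \Phi^*(P)=\sum_{L}\bigl(2^{b(L)}-(b(L)+1)\bigr)+\sum_{A}\bigl(2^{a(A)}-(a(A)+1)\bigr)+|C(P)|,
\]
using both expressions for $|C|$. This splits into an $L$-part, an $A$-part, and $|C|$. By Lemma~\ref{lem:world}, each $2^{b(L)}$ and $2^{a(A)}$ term is superadditive over the two outcomes. The difficulty is that $|C|$ enters with a minus sign, and $|C(P_1)|+|C(P_2)|=|C(P)|$ exactly: every compatible solution survives in exactly one outcome, namely the one agreeing with its linear extension. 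More precisely, a solution $(L,e,U)$ is consistent with $x<y$ or $y<x$ depending on the parts containing $x,y$; when both lie in the same part, it is consistent with both but counted as one triple in each. So additivity of $|C|$ needs care, and I would instead argue per solution-level.

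The cleanest plan is to track the pairs $(L,A)$ jointly. Classify by where $x$ and $y$ sit relative to $L\subset A$:
\begin{itemize}
  \item both in $L$;
  \item both outside $A$;
  \item one in $L$ and one outside $A$;
  \item at least one equal to the answer.
\end{itemize}
Case 3 of the lemma for $L$ (both outside $L$) is exactly where $a$ is unchanged or the $A$-side splits, and vice versa. Then show case by case that
\[
  \Delta\FG+\Delta\FGd\ge\Delta|C|,
\]
where $\Delta$ denotes (sum over children) minus parent. When $x,y\notin L$, the lemma gives $\Delta 2^{b(L)}\ge0$. Meanwhile $|C|$ increases only when a solution with $x,y$ both in $U$ or both in $L$ is duplicated into both children. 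The increase of $|C|$ from $L$'s solutions is then bounded by the gain $2^{b_1}+2^{b_2}-2^{b}$, using $b_j\ge b-1$ and a count of how many minimal elements survive. I expect this matching of the $|C|$ increase against the weight gains to be the main obstacle. I would first try to prove the refined inequality $2^{b_1}+2^{b_2}-2^{b}\ge(b_1+1)+(b_2+1)-(b+1)$ for the $L$-terms with $x,y\notin L$, which holds since $b_1,b_2\in\{b-1,b\}$ and $2^{m}-m$ is increasing. I would then do the same for the $A$-terms and the terms where $x$ or $y$ lies in only one set.
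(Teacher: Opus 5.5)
\paragraph{Verdict: there is a genuine gap in (b).} Your outer framework is fine: the induction over the tree, the reduction to $\Phi^*(P)\le\Phi^*(P_1)+\Phi^*(P_2)$, and your argument that $\Phi^*=1$ at a solved state. You should also handle comparisons whose outcome is already known, for instance by noting $\ell(T)\ge\ell(T_1)$ with $P_1=P$.

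\paragraph{Where (b) fails.} Your ``refined inequality''
\[
2^{b_1}+2^{b_2}-2^{b}\ge(b_1+1)+(b_2+1)-(b+1)\qquad(x,y\notin L)
\]
is false. If $x$ and $y$ are both minimal in $\Omega\setminus L$, then $b_1=b_2=b-1$, and the inequality reads $0\ge b-1$, which fails for $b\ge2$. Concretely, take the antichain on $\{x,y,z\}$ with $i=1$ and $L=\emptyset$:
\begin{itemize}
  \item $\FG$ goes from $4$ to $2+2$, so $\Delta\FG=0$;
  \item the solution $(\emptyset,z,\{x,y\})$ survives in both children, so $\Delta|C|=1$.
\end{itemize}
So charging the duplicated solutions with lower set $L$ to the weight of $L$ cannot work. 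In fact $\Delta\FG\ge\Delta|C|$ is false in general, so the pieces of your decomposition $\Phi^*=(\FG-|C|)+(\FGd-|C|)+|C|$ cannot each be shown superadditive. Your first claim that $|C|$ is additive over the two outcomes is also false, as you partly noticed.

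\paragraph{The missing idea.} Charge each duplicated solution to the side of the answer on which $x$ and $y$ both lie. A solution $(L,e,U)$ survives in both children exactly when $x,y\in L$ or $x,y\in U$. If $x,y\in L$, then all $b(L)+1$ solutions with lower set $L$ are duplicated. If $x,y\notin A$, then all $a(A)+1$ solutions with initial set $A$ are duplicated. Hence
\[
\Delta|C|=\sum_{L\in\Down_{i-1}(P),\ x,y\in L}\bigl(b(L)+1\bigr)+\sum_{A\in\Down_i(P),\ x,y\notin A}\bigl(a(A)+1\bigr).
\]
For exactly these $L$ and $A$, case~1 of Lemma~\ref{lem:world} and its dual give gains of exactly $2^{b(L)}$ and $2^{a(A)}$. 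Every other $L$ or $A$ contributes at least $0$. Then $2^t\ge t+1$ gives $\Delta\FG+\Delta\FGd\ge\Delta|C|$; this is the paper's argument.

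In the example, the duplicated solution is paid for by the $\FGd$-term of $A=\{z\}$. That term is unchanged in both children, so it contributes $1$ to $\Delta\FGd$.
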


\begin{proof}
Removing comparisons with known outcomes does not increase the number of
leaves, and at a correct leaf the problem is solved, where
$\FG=\FGd=|C|=1$ and $\Phi^*=1$. It therefore suffices to show
$\Phi^*(P)\le\Phi^*(P_1)+\Phi^*(P_2)$ for incomparable $x,y$, and to sum
this inequality over the tree. Write $\Delta F=F(P_1)+F(P_2)-F(P)$.
Every compatible solution survives in at least one child, the children's
solutions are solutions of $P$, and a solution survives in both exactly
when $x$ and $y$ lie on the same side of its answer. Hence
\[
 \Delta|C|=\sum_{L\in\Down_{i-1}(P),\ x,y\in L}(b(L)+1)
          +\sum_{A\in\Down_i(P),\ x,y\notin A}(a(A)+1).
\]
Down-sets of the children are down-sets of $P$. By
Lemma~\ref{lem:world}, every $L$ contributes at least $0$ to $\Delta\FG$,
and $L$ with $x,y\in L$ contributes exactly $2^{b(L)}$; likewise for
$\Delta\FGd$ and $A$ with $x,y\notin A$. Therefore
\[
 \Delta\Phi^*\ge\sum_{x,y\in L}\bigl(2^{b(L)}-b(L)-1\bigr)
   +\sum_{x,y\notin A}\bigl(2^{a(A)}-a(A)-1\bigr)\ge0,
\]
since $2^t\ge t+1$.
\end{proof}

A second bound follows a single input. For a solution $s=(L,e,U)$ let
$a_s=a(L\cup\{e\})$, $b_s=b(L)$, and $\KEY(P)=\max_{s\in C(P)}(a_s+b_s)$.

\begin{theorem}\label{thm:key}
$V_i(P)\ge\KEY(P)$.
\end{theorem}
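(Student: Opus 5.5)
Fix a solution $s=(L,e,U)\in C(P)$ attaining $\KEY(P)$ and follow the adversary that answers every comparison according to a fixed total order $\sigma$ extending $P$ that lists $L$, then $e$, then $U$. The idea is that $a_s+b_s$ counts "competitors" that must each be eliminated by a separate comparison along this single path. Concretely, I would define the potential $\kappa(Q)=a_Q(L\cup\{e\})+b_Q(L)$ for every state $Q$ reached along the path (all such $Q$ are consistent with $\sigma$, so $L$ remains a down-set of $Q$ and $A=L\cup\{e\}$ remains a down-set of size $i$). I then show two things: $\kappa=0$ at any leaf where the algorithm correctly answers, and a single comparison decreases $\kappa$ by at most one. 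Together these give at least $\kappa(P)=\KEY(P)$ comparisons on this path, hence $V_i(P)\ge\KEY(P)$.

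For the leaf, the path ends at a state where the answer is determined. Since $\sigma$ is a consistent total order, the returned element must be its $i$-th element $e$, and the known relations must force $L$ to be exactly the predecessors of $e$. So $e$ is the unique minimal element of $\Omega\setminus L$, giving $b=0$, and the unique maximal element of $A$, giving $a=0$.

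For the step, take a comparison of $x,y$ answered by $\sigma$, say $x<y$ in $\sigma$. There are three cases. If $x,y\in L$ or $x,y\in U$, then neither $a(A)$ nor $b(L)$ can change. This needs the same observation as in the proof of Lemma~\ref{lem:world}: new relations are $u<v$ with $u\le x$ and $y\le v$, and these cannot create relations within $M=\Omega\setminus L$ when $x\in L$. Dually, they cannot create relations within $A$ when $y\notin A$. If $x\in L$ and $y\notin L$, new relations go from inside $L$ to outside, so $b(L)$ is unchanged. Meanwhile $a(A)$ can only decrease, through a maximal element of $A$ gaining a successor in $A$; that successor must satisfy $y\le v$ with $v\in A$, forcing $y\in A$, hence $y=e$, and then only $x$'s down-closure within $L$ is affected. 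Here I must check that at most one maximal element of $A$ is lost. The new maximal-killing relations are $u<e$ with $u\le x$, and a maximal $u\neq e$ of $A$ with $u\le x$ must equal $x$, so the loss is at most one. The case $x\in A$, $y\in U$ is symmetric and costs at most one in $b$. Finally, if $x,y\in M\cap A=\{e\}$ there is nothing to do, and mixed cases with $x\in U$ cannot occur because $x<y$ in $\sigma$.

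The main obstacle is the case where both $x$ and $y$ lie in $M$, or both lie in $A$, in general. I must verify that a single comparison can never lower $a$ and $b$ simultaneously. Lowering $b$ requires $y$ to be a minimal element of $M$ gaining a predecessor in $M$, so $x\in M$. Lowering $a$ requires $x$ (the only possible newly non-maximal element, as argued) to lie in $A$ with $y\in A$. Both together force $x,y\in M\cap A=\{e\}$, which is impossible since $x\ne y$. Once this is pinned down, each comparison lowers $\kappa$ by at most one, and the result follows.
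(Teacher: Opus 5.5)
Your argument is the paper's proof written as a potential. You use the same adversary following one total order $L,e,U$. You use the same observation that only a comparison with both ends in $A=L\cup\{e\}$ can remove a maximal element of $A$ (at most one, namely $x$), and only one with both ends in $M=\Omega\setminus L$ can remove a minimal element of $M$ (at most one, namely $y$). And you use the same fact $A\cap M=\{e\}$ to rule out a comparison counting twice. Your last paragraph states exactly this. It is correct and by itself proves $\kappa(Q)-\kappa(Q')\le 1$. The argument that only $x$ can stop being maximal needs only $y\in A$ (hence $x\in A$), not $y=e$, so it applies in general.

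One case of your analysis is false as stated. It is not true that neither $a(A)$ nor $b(L)$ changes when $x,y\in L$ or $x,y\in U$. The two observations you cite give only half of this: $x\in L$ freezes $b(L)$, and $y\notin A$ freezes $a(A)$. For $x,y\in L$ the comparison can make $x$ non-maximal in $A$. With $P=E_3$, $i=3$, $\sigma\colon 1<2<3$, comparing $1$ and $2$ lowers $a$ from $2$ to $1$. For $x,y\in U$ it can make $y$ non-minimal in $M$. For the minimum of $E_4$ with $\sigma\colon 1<2<3<4$, comparing $3$ and $4$ lowers $b$ from $3$ to $2$. The conclusion of that case still holds, because only one of $a,b$ can move there and it moves by at most one. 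So the proof is sound once you replace that sentence by the general argument of your final paragraph, or drop the case split entirely in its favor.
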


\begin{proof}
This is the classical argument about crucial
comparisons~\cite[Sec.~5.3.3]{knuth}, \cite{fg}, applied along one total
order. Answer all comparisons according to a total order extending $P$ that
lists $L$, then $e$, then $U$; let $A=L\cup\{e\}$ and $M=\Omega\setminus L$.
A new relation inside $A$ requires a comparison with both ends in $A$, since
$A$ is an initial segment of that order, and such a comparison removes at
most one maximal element of $A$. Dually, only comparisons inside $M$ remove
minimal elements of $M$, at most one each. At the end $e$ must be the only
maximal element of $A$ and the only minimal element of $M$, which needs at
least $a_s$ and $b_s$ comparisons respectively. As $A\cap M=\{e\}$, no
comparison is counted twice.
\end{proof}

\begin{corollary}\label{cor:combined}
$V_i(P)\ge\mathrm{LB}(P)=\max\{\lceil\log_2\Phi^*(P)\rceil,\KEY(P)\}$.
Since $2^t\ge t+1$ gives $\FG,\FGd\ge|C|$, this is never weaker than the
compatible-solution bound or either one-sided weight bound.
\end{corollary}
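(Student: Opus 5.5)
The first claim, $V_i(P)\ge\mathrm{LB}(P)$, is just the conjunction of two results already proved. Theorem~\ref{thm:phi} gives $V_i(P)\ge\lceil\log_2\Phi^*(P)\rceil$ and Theorem~\ref{thm:key} gives $V_i(P)\ge\KEY(P)$, so $V_i(P)$ is at least their maximum. The real content is the comparison with the older bounds. Here I would show three inequalities: $\Phi^*(P)\ge\FG(P)$, $\Phi^*(P)\ge\FGd(P)$, and $\Phi^*(P)\ge|C(P)|$. Since $\lceil\log_2\cdot\rceil$ is monotone, these give $\mathrm{LB}(P)$ at least the compatible-solution bound $\lceil\log_2|C(P)|\rceil$ and at least each one-sided bound $\lceil\log_2\FG\rceil$, $\lceil\log_2\FGd\rceil$. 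Because $\mathrm{LB}$ also contains $\KEY$ as a term, it is not weaker than $\max(\FG,\FGd)$ combined with $\KEY$ either.

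The key step is to show $\FG(P)\ge|C(P)|$ and $\FGd(P)\ge|C(P)|$ by comparing term by term. The paper has already recorded the identities $|C(P)|=\sum_{L\in\Down_{i-1}(P)}(b(L)+1)=\sum_{A\in\Down_i(P)}(a(A)+1)$. Each $b(L)$ and $a(A)$ is a nonnegative integer, so applying $2^t\ge t+1$ to every summand gives $\FG(P)=\sum_L 2^{b(L)}\ge\sum_L(b(L)+1)=|C(P)|$, and likewise $\FGd(P)\ge|C(P)|$.

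From here the three inequalities follow by rearrangement. We have $\Phi^*-\FG=\FGd-|C|\ge0$ and $\Phi^*-\FGd=\FG-|C|\ge0$. Then $\Phi^*\ge\FG\ge|C|$ gives the remaining one.

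I do not expect a real obstacle. The only points needing care are that $b(L)\ge0$ and $a(A)\ge0$, which holds because a nonempty finite poset has at least one minimal (respectively maximal) element, and that the summation identities for $|C|$ are the ones stated earlier. If $\Down_{i-1}(P)$ is empty, all three sums vanish and the inequalities hold trivially; but for $1\le i\le n$, $C(P)$ is nonempty anyway, since a linear extension of $P$ always exists.
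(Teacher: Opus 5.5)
Your proposal is correct and follows the paper's own argument: the first claim combines Theorems~\ref{thm:phi} and~\ref{thm:key}. The comparison with the older bounds comes from applying $2^t\ge t+1$ termwise to the identities $|C(P)|=\sum_L(b(L)+1)=\sum_A(a(A)+1)$, which gives $\FG,\FGd\ge|C|$ and hence $\Phi^*=\FG+\FGd-|C|\ge\max(\FG,\FGd)\ge|C|$. Your checks that $b(L),a(A)\ge0$ and that $C(P)$ is nonempty supply details the paper leaves implicit.
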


Within the family $\Psi_{F,G}=|C|+\sum_L F(b(L))+\sum_A G(a(A))$ with
$F(0)=G(0)=0$, the choice $F(t)=G(t)=2^t-t-1$ gives $\Phi^*$. It is the
largest valid member:

\begin{proposition}\label{prop:phiopt}
If $\Psi_{F,G}(E_n,1)\le2^{V_1(n)}$ and $\Psi_{F,G}(E_n,n)\le2^{V_n(n)}$ for
all $n\ge1$, then $F(t),G(t)\le2^t-t-1$ for all $t\ge0$, and hence
$\Psi_{F,G}\le\Phi^*$ on every problem.
\end{proposition}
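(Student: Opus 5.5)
We evaluate $\Psi_{F,G}$ on the two extreme problems, where every quantity is explicit, and read off the constraint on $F$ and $G$ separately.

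\textbf{Rank 1 on $E_n$.} Here $i-1=0$, so $\Down_0(E_n)=\{\emptyset\}$ and $b(\emptyset)=n-1$. Since $E_n$ is an antichain, $\Down_1(E_n)$ consists of the $n$ singletons, each with exactly one maximal element, so $a(A)=0$. The compatible solutions are the $n$ choices of minimum, so $|C|=n$. Using $G(0)=0$ we get
\[
\Psi_{F,G}(E_n,1)=n+F(n-1).
\]
Since $V_1(n)=n-1$, the hypothesis gives $n+F(n-1)\le2^{n-1}$. Setting $t=n-1\ge0$ yields $F(t)\le2^t-t-1$.

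\textbf{Rank $n$ on $E_n$.} By order reversal, or directly: $\Down_{n-1}(E_n)$ consists of the $n$ complements of singletons, each leaving one minimal element, so $b=0$. The single down-set of size $n$ is $\Omega$, with $a(\Omega)=n-1$. Again $|C|=n$ and $V_n(n)=n-1$. The same computation with $F(0)=0$ gives $G(t)\le2^t-t-1$.

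\textbf{Comparison with $\Phi^*$.} For any problem $(P,i)$, compare termwise:
\[
\Psi_{F,G}(P)=|C|+\sum_L F(b(L))+\sum_A G(a(A))\le|C|+\sum_L\bigl(2^{b(L)}-b(L)-1\bigr)+\sum_A\bigl(2^{a(A)}-a(A)-1\bigr).
\]
By the identities $|C|=\sum_L(b(L)+1)=\sum_A(a(A)+1)$, the right-hand side equals $|C|+\FG-|C|+\FGd-|C|=\Phi^*(P)$.

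The only delicate point is the bookkeeping in the extreme cases: checking that $\Down_0=\{\emptyset\}$, that the index $t=n-1$ ranges over all $t\ge0$ as $n\ge1$ varies, and that the $F(0)=G(0)=0$ normalization kills the other sum. No step is a real obstacle.
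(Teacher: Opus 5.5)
Your proof is correct and follows the paper's argument. Evaluating $\Psi_{F,G}$ at rank $1$ and rank $n$ of $E_n$ gives $n+F(n-1)\le 2^{n-1}$ and $n+G(n-1)\le 2^{n-1}$, and substituting termwise with $|C|=\sum_L(b(L)+1)=\sum_A(a(A)+1)$ yields $\Psi_{F,G}\le\Phi^*$; you simply write out the rank-$n$ case and the final bookkeeping more explicitly than the paper does.
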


\begin{proof}
For the minimum of $E_n$ there is one lower set with $b=n-1$ and $n$
singleton initial sets with $a=0$, so $n+F(n-1)\le2^{n-1}$. The maximum
gives the bound for $G$. Substituting termwise gives $\Psi_{F,G}\le\Phi^*$.
\end{proof}

\paragraph{Formal verification.}
A second Lean~4 development, based on Mathlib, formalizes
Lemma~\ref{lem:world}, Theorems~\ref{thm:phi} and~\ref{thm:key},
Corollary~\ref{cor:combined} and Proposition~\ref{prop:phiopt} for every
finite strict partial order and every rank $1\le i\le n$, with $C(P)$
defined through total orders extending $P$ as above. Its decision trees may
compare any two of the original elements, even when their order is known; a
tree is correct if it returns the $i$-th smallest element for every total
order extending $P$, and comparisons are counted along the path of each such
input. Theorem~\ref{thm:phi} is proved for the number of leaves and as
$\Phi^*(P)\le2^d$ for every correct tree making at most $d$ comparisons on
every input, with $\Phi^*$ computed in the integers. The proofs use only the
standard axioms and replay in \texttt{leanchecker}; the trusted base now
includes the Mathlib definitions used in the statements. The evaluation and
Remark~\ref{rem:leafcount} rest on computation and are not formalized.

\paragraph{Evaluation.}
At the empty order, $\lceil\log_2\Phi^*\rceil$ improves the one-sided
bound by one comparison for fifteen pairs $(n,i)$ with $n\le16$ and
$i\le\lceil n/2\rceil$, but gives
only 23 for $(16,8)$; it does not reproduce the lower bound 28. The local
inequality of Theorem~\ref{thm:phi} holds on all 149,598 cases with at most
7 elements (isomorphism classes, unsolved ranks, all incomparable pairs).
An independent exhaustive check over all partial orders with at most 9
elements, up to isomorphism, found no violation of
Theorems~\ref{thm:phi} and~\ref{thm:key} against exact values. On a corpus
of 1,663,508 exactly solved reduced problems with at most 12 elements,
gathered by our exact solver during its searches, the average gap to the
exact value is 2.62 for $\lceil\log_2|C|\rceil$ and 1.13 for $\mathrm{LB}$,
and $\mathrm{LB}$ is exact on 10\% of them.

\begin{remark}[A limit of leaf counting]\label{rem:leafcount}
No lower bound on the number of leaves of every correct tree can prove
$V_8(16)\ge29$ by taking its ceiling logarithm. The 252-function
29-comparison certificate unfolds into a tree with 248,381,440 syntactic
leaves, fewer than $2^{28}$, computed by two independent counts. Any bound
valid for all correct trees is at most this number, so its ceiling
logarithm is at most 28. A proof of 29, if true, must use depth or path
information.
\end{remark}

\section{How the algorithms were found}\label{sec:search}

The search is not part of the proofs; it only had to produce certificates.
Three changes to the forward search of~\cite{sea25} made this practical.

\paragraph{Fixed budget, first success.}
That search computes optima by trying
budgets upward from a lower bound and keeps looking for better comparisons
after finding one that works. We instead fix a budget $k$ and accept the
first comparison whose two outcomes are both solvable within $k-1$. One or
two comparisons of slack make construction much easier: for $(14,7)$, with
optimum 25, the time fell from 659\,s at the optimum to 119\,s and 17.6\,s
at one and two above it.

\paragraph{Information-greedy ordering.}
Candidate comparisons are tried in
increasing order of the larger compatible-solution count of their two
outcomes. On two benchmarks this reduced the searched nodes by factors of
about 20 and 100, and at the optimum the time for $(14,7)$ fell from
659\,s to 3.0\,s. With this ordering, budget 29 for $(16,8)$ was solved in
about two minutes and budget 29 for $(16,7)$ in 17 minutes, but budget 28
for $(16,7)$ remained unsolved after about 50 minutes. The bound
$\mathrm{LB}$ of Corollary~\ref{cor:combined} serves as the pruning rule.

\paragraph{Proof-number search.}
Better ordering alone did not reach the tightest budgets. Even an ordering
by $\Phi^*$, which put an optimal comparison first in about 96\% of
exactly solved problems with 7 to 10 elements, did not find a
26-comparison algorithm for $(15,7)$ by depth-first search. This suggested
that the difficulty lay in revising early choices rather than in making
them: depth-first search can spend very long refuting one early wrong
choice before trying another. Construction is an AND/OR
problem: one comparison must succeed at each state, and both of its
outcomes must succeed. Depth-first proof-number search
(df-pn)~\cite{allis,nagai,kishimoto} redirects effort to cheaper
alternatives before a choice is fully refuted. We initialize unexplored
nodes from a cost predictor fitted to the corpus of solved small problems,
shifted pessimistically, which mattered by factors of 10--100. Table~\ref{tab:dfpn}
compares the two searches; df-pn found the 28-comparison algorithm for
$(16,7)$ in five seconds.

\begin{table}[htbp]
\centering
\small
\begin{tabular}{lrrr}
\toprule
instance & df-pn & depth-first & functions \\
\midrule
$(14,7)$ at 25 & 0.55\,s & 3.0\,s & 412 \\
$(15,7)$ at 26 & 79\,s & not found (35\,min) & 1614 \\
$(16,7)$ at 28 & \textbf{5.0\,s} & not found (50\,min) & 329 \\
$(16,8)$ at 29 & 320\,s & 2\,min & 2937 \\
\bottomrule
\end{tabular}
\caption{Time to construct an algorithm at the stated budget (single thread).}
\label{tab:dfpn}
\end{table}

For 17 elements we widened the solver's representation to at most 19
elements; it reproduces earlier certificates byte for byte. Starting from
the upper bounds of~\cite{gkp}, df-pn reached the bounds of
Theorem~\ref{thm:main17} within 89\,s to 63\,min per rank (Table~\ref{tab:certs});
for rank 5 this meets the lower bound.

For $V_8(16)$ at 28, two df-pn runs of 6,600\,s and a six-hour portfolio
ended undecided. Refutations one below the optimum grew by factors of about
100--300 per added element on the instances we measured, so an exhaustive
refutation at 16 elements was out of reach on our desktop.

\section{Conclusion and open problems}

The previous best upper bounds for these entries came from a search
restricted to a few heuristically chosen comparisons per node~\cite{gkp};
the exhaustive search of~\cite{sea25} aims at optimal values, which were
out of reach for them. Our results indicate that much of the remaining gap
was a matter of search strategy. Asking only for some algorithm within a
fixed budget avoids computing optimal values of subproblems, and ordering
comparisons by the remaining uncertainty makes good choices likely. Near
the optimum this was not enough: the difficulty appeared to lie in revising
early choices, and proof-number search, which can turn to an alternative
before the current choice is fully refuted, overcame it. On some
instances, such as the median of 16 at budget 29, depth-first search was
faster, so the two strategies are complementary. Since every bound is
checked independently of the search, none of these heuristics needs to be
correct for the results to hold.

The same experience shows where the difficulty now lies. Constructing an
algorithm became cheap, but proving that none exists within a budget
remained expensive: refutations one below the optimum grew by factors of
about 100--300 per added element in our measurements, and the
28-comparison question for the median of 16 stayed undecided. For the open
entries, progress depends mainly on lower bounds.

\paragraph{Open problems.}
Is $V_8(16)$ equal to 28 or 29? A 28-comparison certificate would settle it
constructively; a proof of 29 must exclude all such algorithms and, by
Remark~\ref{rem:leafcount}, needs more than a total leaf count. An analytic
proof of $V_7(16)\ge28$ would remove the dependence on a large computation.
The gaps for $n=17$ at ranks 6--9 remain open. Finally, can the runner-up
opening be extended into a family of good selection algorithms, for example
for central ranks of $2^k$ elements?

\section*{Data availability}
Code, certificates, and Lean proofs are archived on Zenodo as release
v1.0.0~\cite{cremer-artifacts}. The development repository is available at
\url{https://github.com/JCremerr/selection-7-of-16}.

\clearpage
\section*{Use of AI}
This work was inspired by recent AI-assisted results in
combinatorics~\cite{cremer-ramsey} and was carried out with AI coding
assistants. The author chose the problem, directed the searches, and
developed the text with the assistants in successive revisions. The
assistants implemented the software and produced first drafts of the
proofs. The author is responsible for the content.

\section*{Acknowledgments}
We thank J.~D\"orrer, K.~Gendle, J.~Betz, J.~von Smercek, A.~Steding and
F.~Stober for publishing their search code and optimal algorithms.

\end{document}

%% file: fig_v7_structure.tex
\begin{figure}[!t]
\centering
\begin{tikzpicture}[x=1cm,y=1cm,
    el/.style={circle,fill=black,inner sep=0pt,minimum size=4pt},
    tr/.style={line width=0.5pt},
    ru/.style={line width=1.6pt,blue!70!black},
    lab/.style={font=\scriptsize,inner sep=1pt},
    box/.style={draw,rounded corners=2pt,font=\scriptsize,align=center,inner sep=2.5pt},
    cmp/.style={font=\scriptsize,inner sep=2.5pt},
    stage/.style={font=\footnotesize,align=center,anchor=north}]

\node[font=\small\bfseries] at (-0.95,9.15) {(a)};
\foreach \i in {0,...,3} { \node[el] at (-0.6+0.25*\i,8.3) {}; }
\node[stage] at (-0.22,7.72) {16 elements\\[-1pt]28 left};
\draw[->] (0.55,8.3) -- node[above,font=\scriptsize]{8 cmp.} (1.75,8.3);
\node[el] (p1) at (2.2,8.0) {}; \node[el] (p2) at (2.2,8.6) {};
\draw[tr] (p1) -- (p2);
\node[stage] at (2.2,7.72) {8 pairs\\[-1pt]20 left};
\draw[->] (2.65,8.3) -- node[above,font=\scriptsize]{4 cmp.} (3.85,8.3);
\node[el] (tw) at (4.6,7.95) {}; \node[el] (tb) at (4.3,8.4) {};
\node[el] (tr) at (4.9,8.4) {};  \node[el] (td) at (4.9,8.85) {};
\draw[tr] (tw)--(tb) (tw)--(tr) (tr)--(td);
\node[lab,left=2pt]  at (tw) {$w$}; \node[lab,left=2pt] at (tb) {$b$};
\node[lab,right=2pt] at (tr) {$r$}; \node[lab,right=2pt] at (td) {$d$};
\node[stage] at (4.6,7.72) {4 groups $T_4$\\[-1pt]16 left};
\draw[->] (5.45,8.3) -- node[above,font=\scriptsize]{2 cmp.} (6.65,8.3);
\begin{scope}[shift={(7.0,7.85)}]
  \node[el] (aw) at (0.3,0) {};   \node[el] (ar) at (0,0.45) {};
  \node[el] (ab) at (0.6,0.45) {}; \node[el] (ad) at (0,0.9) {};
  \node[el] (bw) at (1.3,0.45) {}; \node[el] (br) at (1.0,0.9) {};
  \node[el] (bb) at (1.6,0.9) {};  \node[el] (bd) at (1.0,1.35) {};
  \draw[tr] (aw)--(ar) (aw)--(ab) (ar)--(ad) (bw)--(br) (bw)--(bb) (br)--(bd);
  \draw[ru] (ar)--(br);
\end{scope}
\node[stage] at (7.8,7.72) {2 copies of $R_8$\\[-1pt]14 left};
\draw[->] (8.95,8.3) -- node[above,font=\scriptsize]{1 cmp.} (10.15,8.3);
\node[font=\footnotesize,align=center,anchor=west] at (10.3,8.3) {$R_{16}$, see (b)\\[-1pt]13 left};

\node[font=\small\bfseries] at (-0.95,6.3) {(b)};
\begin{scope}[shift={(0.2,2.2)},scale=0.95]
  \node[el] (wB) at (0,0.9) {};   \node[el] (bB) at (-0.5,1.8) {};
  \node[el] (rB) at (0.5,1.8) {}; \node[el] (dB) at (0.5,2.7) {};
  \node[el] (wA) at (1.8,0) {};   \node[el] (rA) at (1.3,0.9) {};
  \node[el] (bA) at (2.3,0.9) {}; \node[el] (dA) at (1.3,1.8) {};
  \node[el] (wC) at (3.6,0.9) {}; \node[el] (rC) at (3.1,1.8) {};
  \node[el] (bC) at (4.1,1.8) {}; \node[el] (dC) at (3.1,2.7) {};
  \node[el] (wD) at (5.4,1.8) {}; \node[el] (rD) at (4.9,2.7) {};
  \node[el] (bD) at (5.9,2.7) {}; \node[el] (dD) at (4.9,3.6) {};
  \foreach \t in {A,B,C,D} {
    \draw[tr] (w\t)--(b\t) (w\t)--(r\t) (r\t)--(d\t);
  }
  \draw[ru] (rA)--(rB) (rA)--(rC) (rC)--(rD);
  \foreach \t in {A,B,C,D} {
    \node[lab,below=2pt] at (w\t) {$w_{\t}$};
    \node[lab,above=2pt] at (b\t) {$b_{\t}$};
    \node[lab,above=2pt] at (d\t) {$d_{\t}$};
  }
  \node[lab,below left=1.5pt]  at (rA) {$r_A$};
  \node[lab,left=2pt]          at (rB) {$r_B$};
  \node[lab,above left=1.5pt]  at (rC) {$r_C$};
  \node[lab,above left=1.5pt]  at (rD) {$r_D$};
\end{scope}

\node[font=\small\bfseries] at (6.6,6.3) {(c)};
\node[box] (n0) at (10.6,6.0) {$R_{16}$: $(16,7)$, 13 left};
\node[box] (n1) at (8.45,4.45) {$(15,6)$, 12 left\\ drop $w_A$};
\node[box] (n2) at (12.75,4.45) {$(16,7)$, 12 left};
\node[box] (n11) at (7.35,2.45) {$(13,6)$, 11 left\\ drop $r_D,d_D$};
\node[box] (n12) at (9.55,2.45) {$(15,6)$, 11 left};
\node[box] (n21) at (11.75,2.45) {$(15,7)$, 11 left\\ drop $d_D$};
\node[box] (n22) at (13.95,2.45) {$(16,7)$, 11 left};
\draw[->] (n0) -- node[cmp,above left]{$r_B<b_C$} (n1);
\draw[->] (n0) -- node[cmp,above right]{$b_C<r_B$} (n2);
\draw[->] (n1) -- node[cmp,left]{$b_B<r_C$} (n11);
\draw[->] (n1) -- node[cmp,right]{$r_C<b_B$} (n12);
\draw[->] (n2) -- node[cmp,left]{$b_A<r_C$} (n21);
\draw[->] (n2) -- node[cmp,right]{$r_C<b_A$} (n22);
\end{tikzpicture}
\caption{The opening of the 28-comparison algorithm for the 7th smallest
of 16. (a)~Eight pair comparisons, four group finals, and three comparisons
among the runner-ups use 15 comparisons, leaving 13. Both outcomes at each
step give the same known order after renaming. (b)~The resulting order
$R_{16}$. Relations run upward from smaller to larger elements: an upward
path gives a known inequality. Edges implied by such paths are omitted.
In group $X$, $w_X$ beat $b_X$
and then $r_X$, while $r_X$ beat $d_X$. Blue thick edges compare runner-ups,
with $r_A<r_B$ and $r_A<r_C<r_D$. (c)~The next two comparisons. Each box
shows (remaining elements, requested rank) and comparisons left; ``drop''
names newly discarded elements. The left branch is worked through in the
text.}
\label{fig:v7structure}
\end{figure}